\newcommand{\comment}[1]{}
\begin{document}

\title{A Simple Conceptual Generator for the Internet Graph}

\numberofauthors{3}
\author{
	\alignauthor Theodoros Lappas \\
		\affaddr{Dept. of CS\&E} \\ 
		\affaddr{UC Riverside} 
%			\small{\thanks{This work is supported in part by ...}} 
		\email{tlappas@cs.ucr.edu}
	\alignauthor Konstantinos Pelechrinis \\
		\affaddr{Dept. of CS\&E} \\
		\affaddr{UC Riverside} \\
		\email{kpele@cs.ucr.edu}
	\alignauthor Michalis Faloutsos \\
			\affaddr{Dept. of CS\&E} \\ 
		\affaddr{UC Riverside} 
		\email{michalis@cs.ucr.edu}
}

\date{}
\maketitle 
\thispagestyle{empty}
\pagestyle{empty}

\begin{abstract}

The evolution of the Internet during the last years, has lead to a dramatic increase of the size of its graph at the Autonomous System (AS) level.  
%This work is mainly concerned with the implementation of a graph generator focused on the Internet graph at the Autonomous System level. 
Soon - if not already - its size will make the latter impractical for use from the research community, e.g. for protocol testing (AS level routing protocols).  
Reproducing a smaller size, snapshot of the AS graph is thus important.  
However, the first step towards this direction is to obtain the ability to faithfully reproduce the full AS topology.   
The objective of our work, is to create a generator able to accurately emulate and reproduce the distinctive properties of the Internet graph. 
Our approach is based on 
\textbf{(a)} the identification of  the jellyfish-like structure \cite{Siganos06} of the Internet and 
\textbf{(b)} the consideration of the peer-to-peer and customer-provider relations between ASs. 
We are the first to exploit the distinctive structure of the Internet graph together with utilizing the information provided by the AS relationships in order to create a {\em tool} with the aforementioned capabilities.   
Comparing our generator with the existing ones in the literature, the main difference is found on the fact that our tool does not try to satisfy specific metrics, but tries to remain faithful to the conceptual model of the Internet structure. 
In addition, our approach can lead to 
\textbf{(i)} the identification of important attributes and patterns in the Internet AS topology, as well as,
\textbf{(ii)} the extraction of valuable information on the various relationships between ASs and their effect on the formulation of the Internet structure.  
%In addition, valuable information on the various relationships between ASs and their effect on the formulation of the Internet structure can be extracted. 
We implement our graph generator and we evaluate it using the largest and most recent available dataset for the AS topology.  
Our evaluations, clearly show the ability of our tool to capture the structural properties of the Internet topology at the AS level with high accuracy.   
Finally, we discuss the potentials of our generator not only to reproduce, but also to shrink the input graph while maintaining its unique structure and properties.
\end{abstract}

\category{C.2.3}{Computer Communication Networks}{Network Operations} 
\category{G.2.2}{Discrete Mathematics}{Graph Theory-Graph Algorithms}

\terms{Internet Topology, Graph Theory, Graph Generator}
\keywords{Jellyfish Model, Graph Mining, Graph Generator}
\section{Introduction}
\label{sec:intro}
\setcounter{paragraph}{0}

The growing importance, usage and size of the Internet has increased the interest for the Internet topology among the research community; 
numerous research ventures have focused on the study of the Intenet graph. 
A thorough understanding of the Internet topology at the AS level can contribute significantly to the enhancement of fundamental applications such as routing and information diffusion.  
In particular, accurate topology information is necessary for numerous reasons; 
\textbf{(a)} {\em simulating} real networks requires the network topology to have been firstly obtained, 
\textbf{(b)} {\em network management} descisions are augmented by topology knowledge (e.g. finding network bottlenecks, deciding about the placement of new routers etc), while 
\textbf{(c)} {\em topology aware protocols} can significantly improve the overall performance enjoyed from the end users (e.g. QoS routing etc).

As part of the general effort to effectively describe the Internet, various graph metrics have been introduced that aim to help towards this goal \cite{Mahadevan06}.  
%Research that has been done on Internet graph has resulted in the introduction of various graph metrics that aim to effectively describe the Internet \cite{Mahadevan06}. 
Applying graph mining techniques and evaluating corresponding metrics to capture and describe the various properties of the Internet can provide us with valuable information for reproducing the topology, as well as evaluating the {\em similarity} between the original and the generated graph. 
An accurate generator would have numerous applications including - but not limited to - the generation of artificial graphs for simulations when real data is not available, the examination of  {\em "what-if"} scenarios, extrapolations on the evolution of the graph structure and also the creation of representative miniature graphs in order to reduce the computational cost of simulations and graph based procedures. 
Furthermore, the generation process can provide us with useful insights on the graph evolution over time and the latent factors that affect and drive its evolution \cite{Leskovec05} helping on descisions for network related operations as mentioned above.

Our objective in this work, is to design and implement an accurate graph generation process that can reproduce the real Internet AS level topology.   
Our main contributions can be summarized in the following:

\begin{itemize}
\item We identify and study the jellyfish structure of the AS level Internet topology.  
For the purposes of our work we use the largest and most detailed snapshot of the Internet available in the literature \cite{He}.
\item We design a graph generation process based on the conceptual jellyfish model and the AS relations.
\item We implement and thoroughly evaluate our generator using various graph metrics that can capture the structural properties of a graph. 
\end{itemize}

\textbf{Scope of our work: }
There are many graph generators proposed in the literature thus far\footnote{A more detailed discussion on these works is given in Section \ref{sec:RelatedWork}.}.  
Nevertheless, our work follows a different approach from all the existing ones.  
In particular, the novelty of our generative scheme, is mainly related with the exploitation of the conceptual model of the real graph.  
Despite the fact that we are mainly focused on the Internet AS graph (and its jellyfish - like conceptual model \cite{Siganos06}), we would like to stress out that {\bf our general point of view can be applied to any other graph generator.  
A simple, conceptual model is only required, that represents the original graph to be reproduced}.

The rest of the paper is organized as follows.  
In the next section we briefly give a background on graph theory and refer to related work on graph generators.  
In section \ref{sec:TheJellyfishModel} we present the jellyfish model which forms the basis for our approach.  
Section \ref{sec:RealGraph} presents our measurements on the real Internet graph at the AS level.  
In section \ref{sec:GraphGenerator} we describe our graph generator, while in section \ref{sec:Evaluation} we present experimental results for the performance of our approach.  
Finally, we conclude with a brief overview and ideas for future work in Section \ref{sec:conclusions}. 
\section{Backgroung and Related Studies}
\label{sec:RelatedWork}
\setcounter{paragraph}{0}

In this section we provide a brief background on graph theory concepts and related work existing in the literatrure.  

\subsection{Graph Metrics}
\label{sec:metrics}

Every network can be represented through a graph $G=(V,E)$.  
The set $V$ of the vertices, can represent various network entities, depending on the abstraction level for the graph.  
For example, each vertex might represent a terminal station, a router or even an AS.  
For the purposes of our study each vertex represents an AS.  
The set $E$ of the edges , represents the links between the nodes.  
The graph can be directed or undirected.  
For the case of the AS graph, the direction of an edge can potentially represent business relations between the different ASs (customer-provider, p2p).
Metrics used in order to describe a graph include: 

\textbf{Node Degree:} 
The degree of a node, is the number of edges incident to the corresponding vertex representing the node.  
For node $v$, $deg(v)$ is used for refering to the node degree.  
The maximum degree of a graph, $\Delta(G)$, is:

\begin{equation}
\Delta(G) = \max_{k \in V}\{deg(k)\}
\label{eq:max_d}
\end{equation}

while the minimum degree of a graph, $\delta(G)$, is given by:
 
\begin{equation}
\delta(G) = \min_{k \in V}\{deg(k)\}
\label{eq:min_d}
\end{equation}

Finally, the average node degree of a graph, $\overline{D(G)}$, is: 

\begin{equation}
\overline{D(G)} = \dfrac{\displaystyle\sum_{i=1}^{\vert V \vert} deg(i)}{\vert V \vert} 
\label{eq:avg_d}
\end{equation}

For directed graphs, we can further define the {\bf indegree} and {\bf outdegree} of the node, depending on whether the edge is ending at or originating from the node respectively.  

\textbf{Node Degree Distribution $P(k)$: }
The degree distribution, is the probability distribution of the node degree over the whole network/graph.  
If we assume that there are $n_k$ nodes with degree $k$, then the degree distribution, $P(k)$, is simply: 

\begin{equation}
P(k)=\dfrac{n_k}{|V|}
\label{eq:degree_distr}
\end{equation}

In a similar way, for directed graphs, we can define the  {\bf indegree} and {\bf outdegree} distributions.  

\textbf{Diameter : }
Let's denote with $d(u,v)$, the distance between nodes $u$ and $v$, i.e. the number of edges in the {\bf shortest path} connecting vertices $u$ and $v$.  
Then, the diameter D(G) of the graph G, is given from: 

\begin{equation}
D(G) = \max_{u,v\in V}\{d(u,v)\}
\label{eq:diameter}
\end{equation}

From the above definition it is evident that for every connected pair $(u,v)$ of nodes, we have that: $d(u,v) < D(G)$.  
Relaxing the above inequality, and having it satisfied from $90\%$ of all connected pairs of nodes of graph G, we get the definition of the {\em effective} diameter.  
Formally, the {\em effective} diameter, $D_{eff}(G)$, is defined as the maximum distance in which $90\%$ of all connected pairs of nodes can reach each other.  
It is obvious that $D_{eff}(G) \le D(G)$.

The above metrics have been extensively used in the literature, in order to capture and describe various graph properties.

\subsection{Static and Temporal Graph Generators}
\label{sec:approach}

To this end, a significant number of graph generators have been proposed. 
Some of these attempts, including our own, are focused on the Internet topology \cite{Fabrikant02} \cite{Chang06} while others view the generation process as a general graph problem \cite{Leskovec05}. 
%Another possible distinction is 
Based on the generative approach followed by the various generators we can further separate them into two groups:

\begin{itemize}

\item Generators that receive as input a static snapshot of a graph and attempt to create a duplicate with identical properties. 
The most common static patterns that these graph generators try to match are the \textit{degree distribution} 
%- the number of nodes versus their rank - 
and the \textit{diameter} of the graph. %- the longest distance between two nodes in the graph.  
A lot of work has been done on the study of static patterns for various types of graphs.  
As some examples, the authors in \cite{Faloutsos3} \cite{Kleinberg99} \cite{Broder00} \cite{Redner98} \cite{Chakrabarti04} study the degree distribution of a large variety of graphs that span a huge spectrum, from the Internet and Web graphs to citation and online social networks' graphs.  Moreover, \cite{Albert02} \cite{Milgram67} take a close look on the diameter of the Internet and Web graphs as well as that of social networks' graphs. 
\item Generators that study the evolution of a given graph and attempt to emulate it. 
A lot of work has been done on the temporal evolution of graphs and the underlying laws that dictate this evolution.  
The work in \cite{Leskovec05} is a representative example of an evolution-based generator that considers temporal properties such as the densification power law and the shrinking diameter. 

\end{itemize}

Generators based on the temporal evolution, try to track and emulate the gradually development of a graph, which constitutes a rather upredictable and unstable process.  
Clearly, this can have a negative effect on the accuracy of the generation process.  
%One of the problems of the second approach is that it is basically an attempt to track and emulate a rather unpredictable and unstable process such as the evolution of a graph. 
Moreover, focusing especically on the Internet graph, the lack of information on the past states and the evolution of the Internet topology, forms another significant obstacle towards adopting an evoluation-based AS level generator.  
%the second approach faces another signicant obstacle, that is, not enough information is available on the past states and evolution of the Internet topology. 
Our approach belongs to the first family of generators, and is thus unaffected by these problems.

A process closely related to graph generation, is that of {\bf graph shrinking}.  
%Another advantage of the rst approach is that it can be used to shrink a graph, a process closely related to graph generation.  
The challenge here is to create a smaller graph while maintaining the important properties and dominant structure of the original. 
As a result, simulations that were too expensive to run on the large original graph can instead be applied to the miniature representation. 
Clearly, generators belonging in the first category - like ours - can be potentially used for graph shrinking.  
On the other hand, evolution-based approaches, will create smaller snapshots with properties different than the real current topology.  
Capturing evolution properties, these generators will reproduce the graph (smaller in size) as it was in the past, when having this smaller size.  
Clearly this is not what we seek for, since many graph properties change with time, as alluded above.

\subsection{Graph Generating Models}
\label{sec:models}

The simplest model for a graph generator is the \textit{Random Graph Model}, proposed by Erdos and Renyi \cite{Erdos60}; 
introduced in the early '60s it constitutes the first graph generator that was ever presented.  
With this approach, each pair of nodes has the same, independent probability of becoming connected through an edge.  
In other words, starting from a set of nodes we add random edges between them, with each edge having the same, independent probability. 
There is clearly a tradeoff between simplicity and accuracy in the Random Graph Model, since the above procedure is not able to generate graphs that match the properties observed at most of the real life ones.

Many recent models for graph generators are based on \textit{preferential attachement} \cite{Albert99} \cite{Albert02} \cite{Kleinberg99} \cite{Kumar99} \cite{Winick02}.  
The intuition behind this approach, is that nodes with high degrees will attract even more nodes, thus, {\bf the rich gets richer}. 
Simply put, in these models, the new nodes that are added to the graph at each repetition of the algorithm, prefer to connect to nodes with a high degree. 
This approach can create graphs with degree distributions similar to the ones noticed in real graphs\footnote{This is clearly related with the fact that real world graphs follow {\bf power laws} \cite{Faloutsos3}; there are a few vertices with high node degree and many vertices with low node degree.}. 
Finally, graphs generated with preferential attachment, tend to exhibit slowly increasing diameters with the cardinality of set $V$.

Briefly, other approaches that have been proposed for graph generation include, the \textit{small-world generator} \cite{Watts98} and the \textit{Waxman generator} \cite{Waxman88}.  
Finally, Fabrikant \textit{et al} \cite{Fabrikant02} have studied the general problem of creating a graph when resource constraints are existent.  

Most of the above proposed schemes try to match a limited set of properties - or even just a single one - and thus fail to capture the general conceptual model behind the graph structure.  	
{\em To the best of our knowledge, we are the first that try to exploit the conceptual model of the original topology (in our case that of the Internet AS graph) towards implementing an accurate graph generator.}

 \section{The Jellyfish model}
\label{sec:TheJellyfishModel}
\setcounter{paragraph}{0}

In this section we will present the conceptual model for the Internet topology that we considered.

Understanding the conceptual underlying structure of complex networks is important towards creating a generation tool.   
Measuring and acquiring metric values (e.g. degree distribution, diameter etc) can formally describe a network graph.  
Nevertheless, as a further step, it is important to attain a model that can capture more {\em visual} information for the network.  

Siganos \textit{et al} \cite{Siganos06} have proposed a conceptual model for the Internet graph at the AS level. 
The authors managed to create an effective conceptual model of the Internet, which could be easily understood and depicted.   
Furthermore, they showed that their model captures the most signicant topological properties of the inter-domain level of the Internet. 
In the following paragraphs we will try to provide a brief, yet complete, presentation of the jellyfish model. 
 
The jellyfish model defines a hierarchy of the graph nodes (ASs). 
First, the {\bf core} of the graph is identified. 
This core is essentially a clique of high degree nodes, all connected to each other through {\em peer-to-peer (P2P)} links. 
Once the core is identified, the rest of the nodes can easily be classified. 
Thinking of the core as the head of the jellyfish, the rest of the nodes are distributed in {\bf shells} that surround the head. 
The first shell contains all the nodes that are adjacent to the core, except the one degree nodes. 
Recursively, every shell contains all the nodes (except the one degree nodes) that are connected through an edge to some node that belongs to the previous shell.  
The one degree nodes are represented as {\bf hangers}, hanging from the shell of the other end of their single edge. 
A visualization of this model can be found in %Figure \ref{fig:Jellyfish} 
\cite{Siganos06}.

% \begin{figure}[ht]
%\begin{center}
%\parbox{3in} {
%    \centerline{\includegraphics[scale=0.3]{}}
%    \caption{The AS level Internet topology as a jellyfish.}
%    \label{fig:Jellyfish} 
%}
%\end{center}
%\end{figure} 

\begin{table*}[ht]

%	\begin{small} 
	\begin{center}
		\centerline{\begin{tabular}{|l|l|l|l|l|}
			\hline
			Ring & \# of nodes & Percentage of nodes & \# of P2P edges & \# of Customer-Provider edges \\ \hline
			$0$ & $9$ & $0.05\%$ & $36$ & $0$ \\ \hline
			$1$ & $6419$ & $32.19\%$ & $12873$ & $7396$ \\ \hline
			$2$ & $6102$ & $30.6\%$ & $1167$ & $1481$ \\ \hline
			$3$ & $1245$ & $6.24\%$ & $165$ & $190$ \\ \hline
			$4$ & $151$ & $0.76\%$ & $3$ & $8$ \\ \hline
			$5$ & $6$ & $0.03\%$ & $0$ & $0$\\
			\hline
		\end{tabular}}
	\end{center}
%	\end{small}
\caption{The distribution of the nodes and types of edges among the various rings of the jellyfish model.}
\label{tab:distro}
\end{table*}

By using this conceptual model, the authors in \cite{Siganos06} identified six layers in the Internet topology, counting the core as layer 0.  
Their observations can be summarized in the following: 

\begin{itemize}

\item Approximately 80-90\% of nodes are in the first 3 layers.
\item Network grows "horizontally", which means that the evolution of the graph doesn't result in more layers but in larger-denser layers.   
\item The topological importance of a shell decreases as we move away from the core.  
\item Most of the connectivity is towards the center.  This means that nodes in outer shells needs to route through previous shells for most of their shortest path connections.
\item The nodes in the first three layers are within 5 hops from each other.

\end{itemize}
 
This simple and easy even to visualize model, captures many important properties of the AS Internet topology.  
In particular, the jellyfish model, can accurately express the following facts/characteristics of the Internet AS graph:

\begin{itemize}

\item The network is compact, as 99\% of the nodes are within 6 hops; 
the jellyfish model exhibits the same diameter \cite{Siganos06}.
\item There exists a highly connected center in the Internet.  
This center corresponds to the clique of the high degree nodes, defined at the jellyfish model as the core.
\item There exists a loose hierarchy at the internet; 
nodes far from center are less important at the jellyfish model.
\item One-degree nodes are scattered everywhere; 
hangers hang from all jellyfish's shells.
\item The network has the tendency to be one large connected component (core).  

\end{itemize} 

Among the other contributions of this model, is that it forms a benchmark that can be used to evaluate the performance of various graph generators. 
The degree to which a generated graph demonstrates the above structure can be a testament of how well the graph generation tool used can capture the actual Internet graph characteristics.

In this work, we focus on the implementation of a generator for the Internet AS topology, able to produce graphs that are faithful to the jellyfish model.
For the purposes of our study we use the most accurate snapshot of the AS topology existing in the literature \cite{He}.  
We start off by identifing the jellyfish structure of this snapshot and its specific properties in the next section. 
We then study the extracted structure and use the obtained information to generate the new graph (Section \ref{sec:GraphGenerator}).
\section{The real internet graph}
\label{sec:RealGraph}
\setcounter{paragraph}{0}

In this section we present the data and the information obtained from the real Internet topology.

In order to drive the implementation and test the accuracy of our graph generator we need a detailed snapshot of the real Internet topology.  
The latter will help us (i) to {\em extract} the jellyfish-related parameters of our model (e.g. number of shells, size of shells, nodes distributions etc) and (ii) to {\em compare} the topology created from our generator with the actual one.  
For the needs of our work, we used the snapshot collected from Yihua He at UCR \cite{He}.  
To this point this is the largest available snapshot of the internet.  
The graph obtained is part of a larger project in which the missing AS links from the commonly-used Internet topology snapshots are to be identified \cite{He}.  
This procedure includes cross validation of BGP routing tables, Internet Routing Registries and traceroute data.  
An interesting point made is that most of the missing peer-to-peer AS links are found to be Internet Exchange Points (IXP) links.  
More details for the project and the data used can be found in \cite{He1} \cite{He2} \cite{He3} \cite{He4}.

Using the above data, we extracted valuable information with regards to the conceptual representation of the graph as a jellyfish.   
We observe that approximately 87-90\% of the total number of nodes, belong either to the core (we will also refer to the core as ring 0) or to the first 2 rings (shells), or are hangers originating from rings 0-2. 
Later, during the presentation of our graph generator, it will be more clear why this information is important.  
%This is an important point as we will later see in the analysis of our graph generator. 
Additionaly, we would like to emphasize on the fact that only 9 nodes belong to the core of the jellyfish.

\begin{table}[h]

%	\begin{small} 
	\begin{center}
		\centerline{\begin{tabular}{|l|l|l|}
			\hline
			Origin ring & \# of nodes & Percentage of nodes\\ \hline
			$0$ & $1254$ & $6.29\%$ \\ \hline
			$1$ & $2912$ & $14.6\%$  \\ \hline
			$2$ & $1420$ & $7.12\%$ \\ \hline
			$3$ & $377$ & $1.89\%$ \\ \hline
			$4$ & $39$ & $0.20\%$ \\ \hline
			$5$ & $1$ & $0.005\%$\\
			\hline
		\end{tabular}}
	\end{center}
%	\end{small}
\caption{The distribution of the hanger of the jellyfish model.}
\label{tab:hangers}
\end{table}

\begin{table}[h]

%	\begin{small}  
	\begin{center}
		\centerline{\begin{tabular}{|l|l|l|}
			\hline
			Bridge & \# of P2P edges & \# of Customer-Provider edges \\ \hline
			$0-1$ & $521$ & $9104$ \\ \hline
			$0-2$ & $91$ & $0$  \\ \hline
			$0-3$ & $2$ & $0$ \\ \hline
			$1-2$ & $5532$ & $11679$ \\ \hline
			$1-3$ & $261$ & $930$ \\ \hline
			$1-4$ & $24$ & $56$\\ \hline
			$1-5$ & $2$ & $0$ \\ \hline
			$2-3$ & $514$ & $1216$ \\ \hline
			$2-4$ & $27$ & $87$\\ \hline
			$3-4$ & $24$ &$106$\\ \hline
			$3-5$ & $1$ & $2$ \\ \hline
			$4-5$ & $0$ & $6$\\
			\hline
		\end{tabular}}
	\end{center}
%	\end{small}
\caption{The distribution of the type of edges between nodes at different rings.}
\label{tab:type}
\end{table}

Tables \ref{tab:distro} - \ref{tab:type} contain important information for the implementation of our graph generator.  
In particular, these tables include values for the input parameters needed from our model\footnote{More details for the computation of these parameters from the Internet snapshot is given in the following section.}.  
In a nutshell, we present data for: 

\begin{itemize}
\item The percentages of the nodes that belong to a specific ring.
\item The percentages of the nodes that are hangers originating from a specific ring.
\item The percentages of P2P and CP (customer-provider) edges that exist within each ring.
\item The percentages of P2P and CP edges that belong to each {\bf bridge}. 
A bridge XY is defined as the set of edges connecting nodes from ring X to nodes from ring Y.
\end{itemize}

In the following section we elaborate on our generator. 
As we show, \textbf{{\em our model requires a limited number of simple parameters so as to capture and reproduce the conceptual model of the Internet topology}}.
\section{The model of our graph generator}
\label{sec:GraphGenerator}
\setcounter{paragraph}{0}

In this section we will present the steps towards creating our graph generator.

For the purposes of our work, and in order to capture the principles of the jellyfish structure of the Internet topology, we used a simple deterministic approach. 
The hierarchical structure of our model has three distinct components: (i) rings, (ii) bridges and (iii) hangers. 
In the following, a brief explanation is given on how we dealt with each of these components in our generator.

\textbf{Rings:} 
These are essentially the shells of the jellyfish. 
We start by identifying the core (clique), which we call ring 0. 
After that, we populate the remaining rings based on the jellyfish model. 
For each ring, we calculate the number of nodes (as a percentage of the total) and also the percentages of P2P and Customer-Provider edges that exist within the ring.   

\textbf{Bridges:} 
These are the parts that connect the shells of the jellyfish. 
A bridge contains no nodes, but only the edges connecting the nodes of two specific rings. 
For each bridge, we calculate the percentages of P2P and Customer-Provider edges that belong to the bridge.  
It is entirely possible for a bridge between two specific rings to contain zero edges.  

\textbf{Hangers:} 
These are the hangers of the jellyfish model. 
Each ring is paired with a set of hangers. 
For example hanger-set 1 belongs to ring-1 and contains the 1-degree nodes that stem from ring 1. 
For each hanger-set we calculate the number of nodes (as a percentage of the total) belonging into it.

After creating the jellyfish and populating its various parts, based on the real Internet topology, the next step is to study the rings in more detail and start building our own graph. 
After processing the given snapshot, it was clear that the P2P edges within each ring follow a {\em power law}, with a {\em different coefficient for each ring}. 
The coefficient of each ring is calculated and stored.  
%TED
% AMA EXOYME AUTA TA DATA KALO EINAI NA TA BALOYME SE ENA TABLE

The study of CP edges is more complicated, since there are {\bf constraints} that need to be taken into account. 

\begin{enumerate}
\item Loops need to be considered. 
A customer of a specific node cannot be at the same time its provider (or recursively, a provider of its providers).  
\item In order to emphasize to the hierarchical jellyfish structure, some nodes have to be moved up in the ring hierarchy so as to ensure that a node's provider can only exist either in a higher ring or in the same ring as the node. 
\item Connectivity needs to be assured. 
Even though this turns out to be trivial after the application of our strategy and the above mentioned constraints, there could be cases where some extra edges need to be added to ensure that the connectivity of the structure is maintained. 
\end{enumerate} 
With the above constraints in mind, we implement a {\bf rich gets richer} technique that accurately matches the actual graph. 
The important thing to note here is that for each ring of the original graph we need to calculate and store the pace at which the rich gets richer (coefficient). 
Finally, a bias is introduced to increase the number of nodes with exactly 2 providers, which is the common case in the actual snapshot. 
With the exception of nodes with 2 customers, that constitute the vast majority of the nodes, the rich gets richer strategy manages to match the provider distribution without further interference.

 A final constraint (constraint 4) is used to control the effect of our generation technique. 
 In particular, for each ring, an upper bound on the maximum and minimum number of CP and P2P edges for a single node is placed, based on the respective bounds noted in the actual ring. 
 In terms of implementation, this means that nodes with number of edges currently below the minimum are generally preferred, while nodes that reach the upper bound are excluded after they reach this maximum.

{\bf Generation Process: }
Having all the above parameteres and constraints in mind, we start building our graph by creating the core and the corresponding clique using P2P links. 
We then populate the rings by simply adding the respective number of nodes to each ring.  
After the nodes have been added, we add the P2P edges within the rings using the calculated power law coefficients.  
The rich gets richer approach is then applied to the resulting structure in order to populate the bridges with P2P edges. 

The next step is the addition of the CP edges. 
A similar strategy as above is used, filling first the rings and then the various bridges. 
The difference here is that there are additional and more complicated constraints to be considered, i.e. constraints 1-3\footnote{Note here that for P2P edges, only (the simple) constraint 4 needs to be regarded.}. 
One edge is added at a time, as with the P2P edges, but only if it does not violate any constraint. 
If the latter is not satisfied, we choose another provider or customer, depending on the violation, and try again.

The model is completed with the addition of the hangers. 
The rich get richer technique is also used here, keeping the upper and lower bounds in mind (constraint 4).  

In the following section we evaluate the performance of our generation proccess.

\section{Evaluation of our model}
\label{sec:Evaluation}
\setcounter{paragraph}{0}

\newtheorem{proposition}{\bf Proposition}

In this section we will present some results from the evaluation of our graph generator.  

As mentioned above, some of the main metrics that are used for describing properties of the graphs are the node degree distribution, the max/min/average node degree, the (effective) diameter of the graph as well as the clustering coefficient \cite{Mahadevan06}.

Using existing tools \cite{CAIDA}, we calculate a set of graph metrics for both the actual snapshot of the Internet \cite{He}, as well as the graph generated from the process described in the previous section.  
Table \ref{tab:ev_metrics} presents the results obtained.
%The authors in \cite{Mahadevan06} provide effective scripts to compute various descriptive metrics of a graph \cite{CAIDA}. We applied the scripts to the snapshot we got from \cite{He} and then repeated the process for the graph we produced with our generator. The results are presented in Table 4.

Both graphs, the Internet snapshot \cite{He} and the graph obtained from our generator, are directed. 
The direction of each edge is based on the definition of the ASs relationships. 
P2P edges can be considered bidirectional, while the direction of CP edges can be defined based on whether we view the edge as a citation from a customer to its provider or a service offer from a provider to a customer. 
Nevertheless, we would like to stress out that the results presented here refer to the corresponding undirected graph \cite{CAIDA}.    
Consequently, since the metrics do not consider direction, we only obtain structural information about the graphs. 
The direction of an edge represents the type of the corresponding link (P2P/CP). 
Even though the undirected metrics can not be used for evaluating the quality of the reproduction of this information, the very nature of our strategy guarantees an accurate generation of the directed edges, at least in terms of quantity and distribution.

\begin{table}

%	\begin{small} 
\begin{center}
		\centerline{\begin{tabular}{|l|l|l|}
			\hline
			Metrics & Our graph & Graph at \cite{He} \\ \hline
			\# nodes & $19422$ & $19936$ \\ \hline
			\# edges & $59806$ & $59508$ \\ \hline
			Average node degree & $6.1$ & $5.9$ \\ \hline
			Max node degree & $2521$ & $2430$ \\ \hline
			Mutual Information Ratio & $0.96$ & $0.95$ \\ \hline
			Maximum local clustering & $1$ & $1$ \\ \hline
			Clustering coefficient & $0.052$ & $0.061$\\ 
			\hline
		\end{tabular}}
	\end{center}
%	\end{small}
\caption{A comparison table with the metrics computed at both graphs.}
\label{tab:ev_metrics}
\end{table}

\begin{figure*}[ht]
\centering
\subfigure[The node degree distribution for the given snapshot of the Internet.] % caption for subfigure a
{
    \label{fig:sub:a}
    \includegraphics[width=6.5cm]{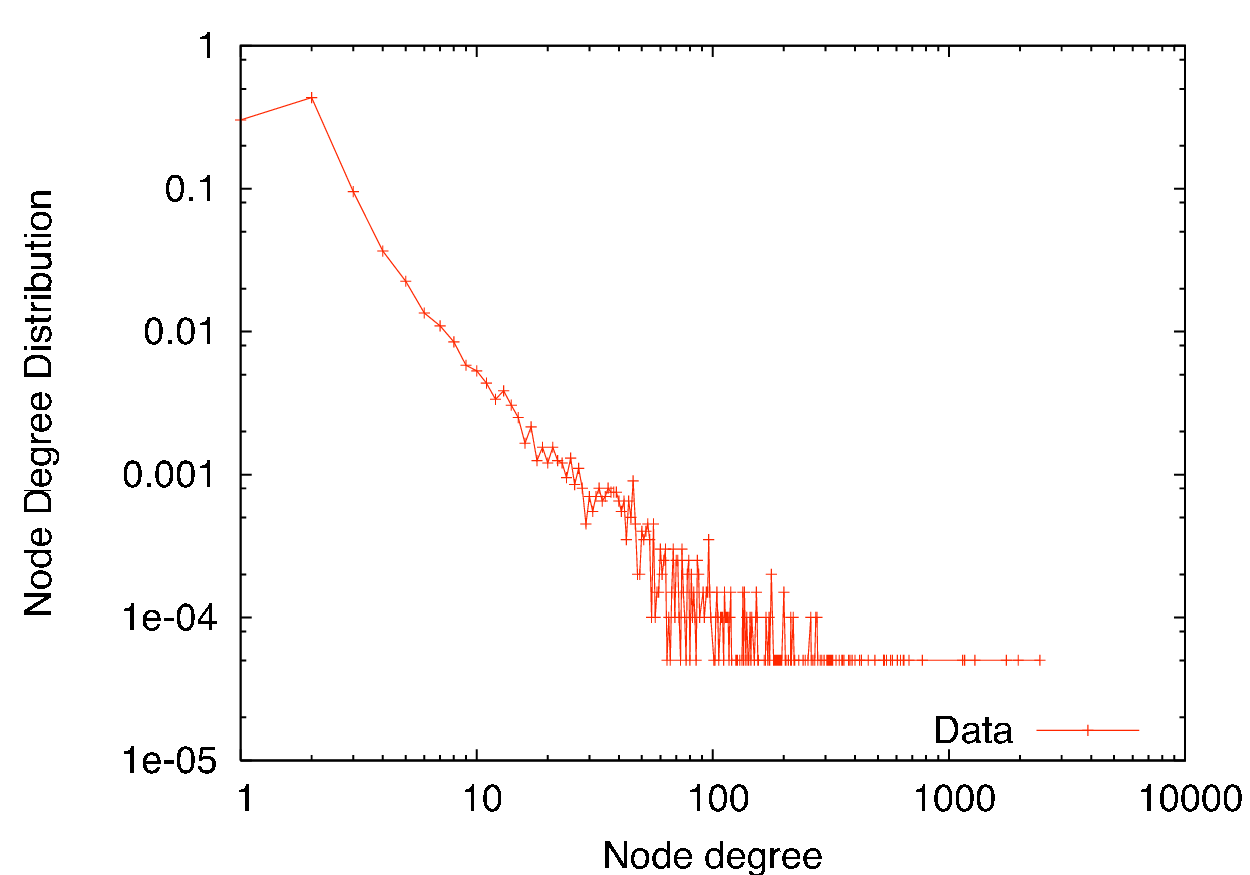}
}
\hspace{1cm}
\subfigure[The node degree distribution for the graph that we generated.] % caption for subfigure b
{
    \label{fig:sub:b}
    \includegraphics[width=6.5cm]{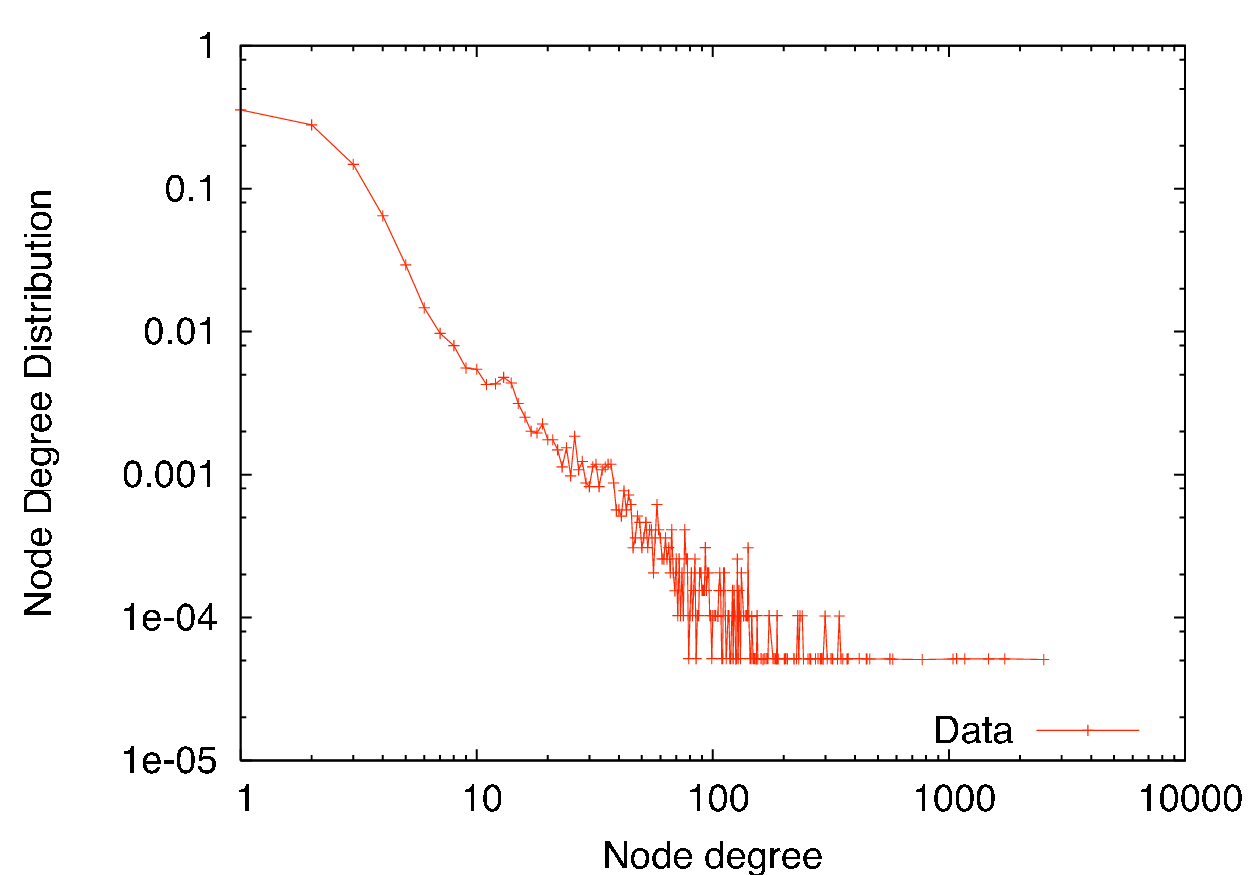}
}
\caption{The node degree distribution for both the snapshot of the Internet that we had and for the graph we generated.}
\label{fig:sub-deg} % caption for the whole figure
\end{figure*}

\begin{figure*}[ht]
\centering
\subfigure[The node degree CCDF for the given snapshot of the Internet.] % caption for subfigure a
{
    \label{fig:sub:a}
    \includegraphics[width=6.5cm]{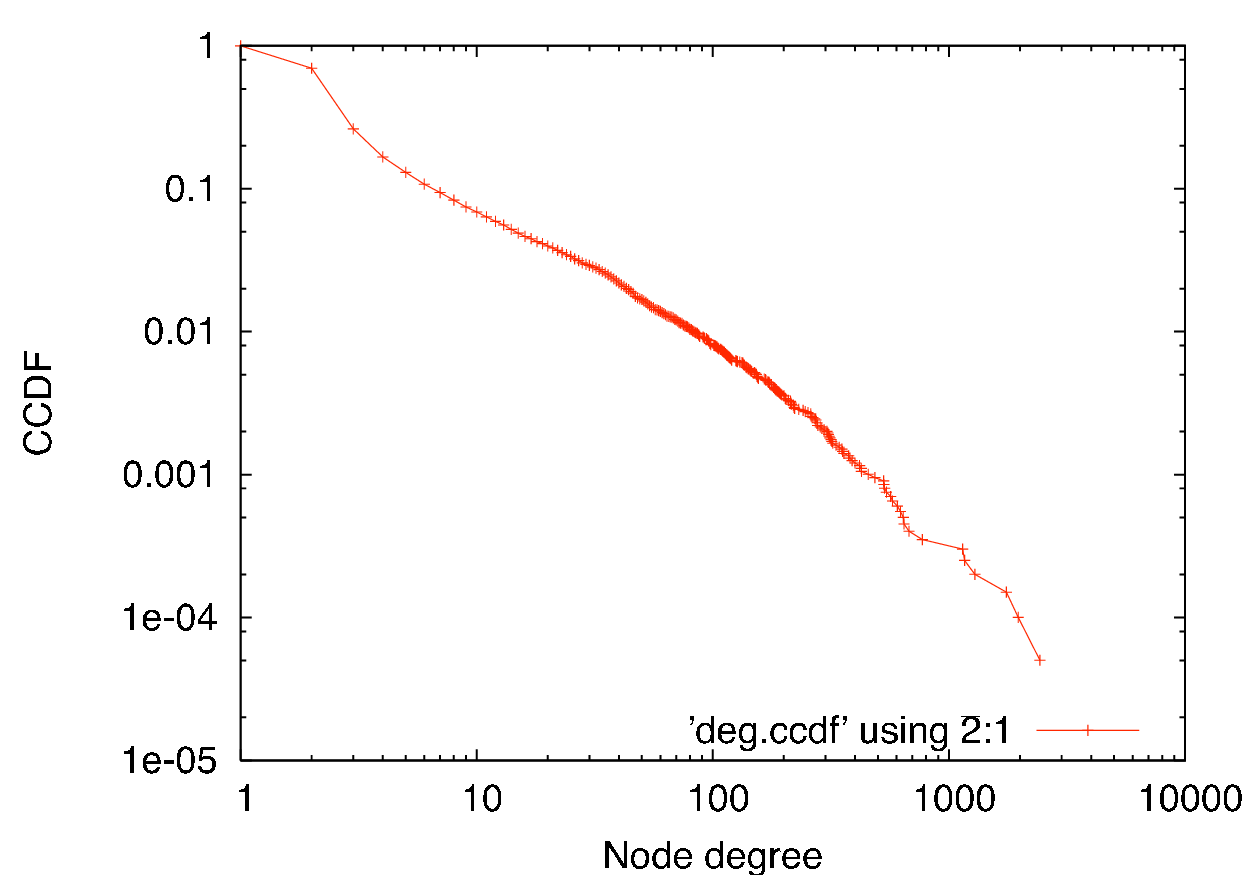}
}
\hspace{1cm}
\subfigure[The node degree CCDF for the graph that we generated.] % caption for subfigure b
{
    \label{fig:sub:b}
    \includegraphics[width=6.5cm]{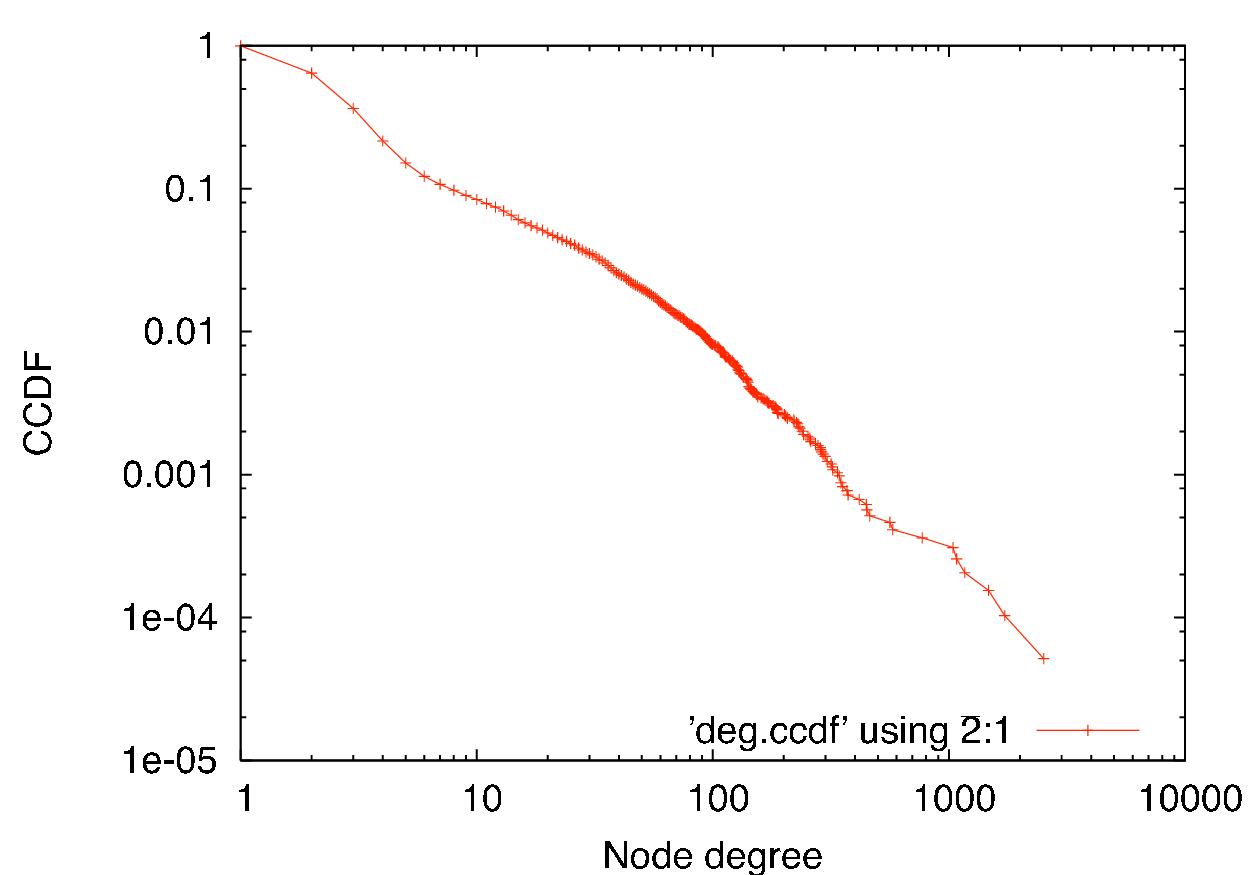}
}
\caption{The node degree CCDF for both the snapshot of the Internet that we had and for the graph we generated.}
\label{fig:sub-ccdf} % caption for the whole figure
%\vspace{0.02in}
\end{figure*}

Our evaluations presented in Table \ref{tab:ev_metrics}, reveal that the calculated metrics for the two graphs have very similar values.  
As a result, the graph that our generator created exhibits the same structural properties with the internet snapshot we provided to it as input.  
Furthermore, we can show the following proposition. 

\begin{proposition}\label{pr:diameter}
The effective diameter of our graph is the same with the one reported by empirical studies of the Internet AS topology \cite{Siganos03}.
\end{proposition}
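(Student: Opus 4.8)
The plan is to derive the effective diameter of the generated graph as a structural consequence of the jellyfish model it reproduces, rather than obtaining it only by direct measurement. First I would recall the definition of $D_{eff}(G)$ given after Equation~\ref{eq:diameter}: it is the smallest distance within which at least $90\%$ of all connected pairs $(u,v)$ reach each other, so the task reduces to locating the $90\%$ quantile of the shortest-path distribution. The argument rests on two facts already in hand: our generator populates the same number of rings with the same per-ring node percentages and hanger distribution as the real snapshot (Tables~\ref{tab:distro} and~\ref{tab:hangers}), and the jellyfish model sends most connectivity toward the center, so shortest paths are controlled by the ring indices of their endpoints.

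The key step is a hop-count bound read off the ring hierarchy. Since ring $0$ is a clique of diameter $1$ and each ring $i$ is joined by a bridge to ring $i-1$, a node in ring $i$ reaches the core in at most $i$ hops; hence two nodes lying in rings $i$ and $j$ are separated by at most $i+j+1$ hops, and a hanger contributes at most one extra hop at each endpoint. For the inner rings $0$, $1$ and $2$ this reproduces the $5$-hop bound already noted for the first three layers, matching the compactness measured empirically in \cite{Siganos06}.

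The main obstacle is turning a concentration of \emph{nodes} into a concentration of \emph{pairs}. We have established that roughly $87$--$90\%$ of all nodes lie in rings $0$--$2$ or hang from them, but the fraction of pairs with both endpoints inner is only about $(0.9)^2 \approx 0.81$, which by itself falls short of the $90\%$ threshold. To close this gap I would also bound the inner--outer pairs: because every outer-ring node routes through the center, a node in ring $3$ or beyond sits within a small, fixed number of additional hops of the dense inner core, so almost all inner--outer pairs are within the same target distance. Summing the inner--inner and inner--outer contributions, I would show the cumulative fraction of connected pairs first crosses $90\%$ at exactly the hop count reported in \cite{Siganos03}.

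Finally, because every ingredient of this count — the number of rings, the per-ring node percentages, the hanger-set sizes and the bridge connectivity — is reproduced by construction from the real snapshot (Section~\ref{sec:GraphGenerator}), the cumulative distance distribution of the generated graph coincides with that of the modeled topology. Its $90\%$ quantile, namely $D_{eff}$, is therefore identical to the empirically reported value, which is the assertion of the proposition.
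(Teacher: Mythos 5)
Your core argument is the same one the paper uses: the layer-distance bound $d(u,w)\le k_u+k_w+1$ from \cite{Siganos06}, combined with the observation that roughly $90\%$ of nodes lie in rings $0$--$2$ (or hang from them), yields the $5$-hop figure, and the per-ring percentages are reproduced by construction so the generated graph inherits it. Where you genuinely add something is in flagging the nodes-versus-pairs mismatch: the effective diameter as defined after Equation~\ref{eq:diameter} is a statement about $90\%$ of connected \emph{pairs}, whereas the paper's proof silently passes from ``$90\%$ of the nodes belong to layers $0$ through $2$'' to ``$90\%$ of the nodes are within $5$ hops of each other'' and declares that to be the definition. You are right that $90\%$ of nodes being inner gives only about $(0.9)^2\approx 0.81$ of pairs being inner--inner, so the paper's own argument, read literally, does not reach the $90\%$ threshold. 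The paper never addresses this; it leans on the footnote stating the value was verified experimentally.

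However, your proposed repair does not actually close the gap you identified. The claim that ``almost all inner--outer pairs are within the same target distance'' does not follow from the hop bound you derived: a pair with endpoints in ring $2$ and ring $3$ gets the bound $2+3+1=6$, not $5$, and hangers add a further hop at each end. To show that the cumulative pair-distance distribution crosses $90\%$ at exactly $5$, you would need information about \emph{actual} shortest-path lengths (how often the bound $i+j+1$ is loose, which the ``connectivity is towards the center'' observation suggests but does not quantify), and neither Tables~\ref{tab:distro}--\ref{tab:type} nor the bound itself supply that. So your sketch ends where the real work begins, and like the paper you would ultimately have to fall back on direct measurement of the generated graph to certify the value $5$. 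In short: same skeleton as the paper, a sharper awareness of the weak step, but no complete replacement for it.
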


\begin{proof}
As mentioned in \cite{Siganos06} there is an upper bound on the distance between any node $u$ of layer $k_{u}$ and any node $w$ of layer $k_{w}$.  This upper bound is: 

\begin{equation}
d(u,w)=k_{u}+k_{w}+1
\label{eq:upper_bound}
\end{equation}

Furthermore, we have seen that almost 90\% of the nodes belong to layer 0 through 2. 
Thus, 90\% of the nodes are within 5 hops away from each other.  
Recall that the above is the definition for the {\em effective diameter}.   
The graphs generated from our generator are therefore guaranteed to exhibit an effective diameter of 5\footnote{This fact has also been verified with our evaluations.}, which is also the effective diameter observed for the Internet AS level topology \cite{Siganos03}.
\end{proof}

Finally, Figures \ref{fig:sub-deg} and \ref{fig:sub-ccdf} present our results for the node degree distribution and the node degree CCDF - Complementary Cumulative Distribution Function - for both our graph and the original AS topology used. 
The node degree distribution is the plot of the number of nodes with a specific degree versus their degree (i.e. the histogram of the node degree), while the node degree CCDF represents the probability that the degree of a randomly picked node is greater than a certain value.  
What we can observe here is that both metrics provide a close match again, in the sense that the general trend in both graphs is the same. 
This further supports the fact that \textbf{{\em our graph generator accurately captures the structural properties of the Internet graph at the AS level topology}}.

\section{Conclusions - Future work}
\label{sec:conclusions}
\setcounter{paragraph}{0}

The main goal of our work is to provide a tool, able to reproduce with high accuracy the Internet topology at the AS level.  
Unlike previous efforts, we focus on a conceptual model for the Internet topology, which can effectively capture the various properties of the real AS level graph.  
To reiterate, the main contributions of our work are:
\begin{itemize}
\item We identified and studied the jellyfish structure using the largest available snapshot of the Internet.
\item We created (designed and implemented) a graph generator based on the conceptual model of the jellyfish for the Internet and the AS relations.
\item We evaluated our work, by using several metrics that can effectively capture the structure of a graph.
\end{itemize}

The outcome of our work is very promising. 
The graph that our generator produces exhibits all the important features of the conceptual model. 
Most graph generators up to now were focused on a single, specific metric, which was thought to capture most of the structural information of the input graph. 
In this work, we proposed a different, novel approach, in which the objective is to emulate a conceptual model as opposed to a specific metric.  
The novelty of our work can be attributed to the following two points:

\begin{enumerate}
\item Our generator identifies the jellyfish-like structure of the Internet \cite{Siganos06} and uses it to construct the backbone of the new graph.  
\item The creation of the new graph is driven by the noted relations between the ASs \cite{Huston99} (P2P and CP relations).
\end{enumerate}

%Finally, in the future we seek to further exploit our approach.  
%Further work can be done to this direction in the future.   
Some interesting issues still to be considered can be summarized in the following:

\begin{itemize}
\item Up to this point we have only taken into account metrics for undirected graphs. 
It will be of great interest to use metrics for directed graphs in order to further evaluate and improve our generation strategy.
\item More sophisticated statistical methods can possibly be applied in order to further enhance the accuracy and performance of our generator.  
\item As mentioned in Section \ref{sec:approach}, graph shrinking, is of great importance for the research community (e.g. minimizing the simulation cost).  
In the future, we seek to examine the capabilities of our generator to form an accurate graph shrinking tool as well.
\end{itemize}

We opt to address the above issues in our future work, creating by this way a complete Internet AS level graph generator available to the research community.  
Finally, as a step further, we are interested into applying our generic conceptual approach on different types of graphs that follow different conceptual models.  
A higher level of abstraction will be thus imposed to our graph generator, increasing at the same time its possible applications.
%\vfill\eject
\bibliographystyle{unsrt}
\bibliography{main}

\begin{thebibliography}{10}

\bibitem{Siganos06}
G.~Siganos, S.~L. Tauro, and M.~Faloutsos.
\newblock {{Jellyfish: A Conceptual Model for the AS Internet Topology}}.
\newblock In {\em Journal of Communications and Networks, Vol. 8, No. 3,
  September, pp 339-350}, 2006.

\bibitem{Mahadevan06}
P.Mahadevan, D.Krioukov, M.Fomenkov, B.Huffaker, X.Dimitripoulos, kc~claffy,
  and A.Vahdat.
\newblock {{The Internet AS-Level Topology: Three Data Sources and One
  Definitive Metric}}.
\newblock In {\em ACM SIGCOMM Computer Communications Review (CCR)}, January
  2006.

\bibitem{Leskovec05}
J.Leskovec, D.Chakrabarti, J.~M. Kleinberg, and C.Faloutsos.
\newblock {{Realistic, Mathematically Tractable Graph Generation and Evolution,
  Using Kronecker Multiplication}}.
\newblock In {\em PKDD 2005: pp. 133-145}, 2005.

\bibitem{He}
{Finding Missing AS Links}.
\newblock http://www.cs.ucr.edu/~yhe/LordOfLinks/.

\bibitem{Fabrikant02}
A.~Fabrikant, E.Koutsoupias, and C.~H. Papadimitriou.
\newblock {{Heuristically optimized trade-offs: A new paradigm for power laws
  in the Internet}}.
\newblock In {\em the 29th International Conference on Automata, Languages, and
  Programming, vol. 2380 of Lecture notes in Computer science, Heidelberg,
  2002, Springer, pp. 110-122}, 2002.

\bibitem{Chang06}
H.~Chang, S.Jamin, and W.Willinger.
\newblock {{To peer or not to Peer: Modeling the Evolution of the Internet's
  AS-level Topology}}.
\newblock In {\em IEEE INFOCOM}, 2006.

\bibitem{Faloutsos3}
M.~Faloutsos, P.~Faloutsos, and C.~Faloutsos.
\newblock {{On Power-Law Relationships of the Internet Topology}}.
\newblock In {\em ACM SIGCOMM, pp. 251-262}, 1999.

\bibitem{Kleinberg99}
J.~M. Kleinberg, S.~R. Kumar, P.~Raghavan, S.~Rajagopalan, and A.~Tomkins.
\newblock {{The web as a graph: Measurements, models and methods}}.
\newblock In {\em International Conference on Combinatorics and Computing},
  1999.

\bibitem{Broder00}
A.~Broder, R.~Kumar, F.~Maghoul1, P.~Raghavan, S.~Rajagopalan, R.~Stata,
  A.~Tomkins, and J.~Wiener.
\newblock {{Graph structure in the web: experiments and models}}.
\newblock In {\em World Wide Web Conference}, 2000.

\bibitem{Redner98}
S.~Redner.
\newblock {{How popular is your paper? An empirical study of the citation
  distribution}}.
\newblock In {\em European Physical Journal B4, pp: 131-134}, 1998.

\bibitem{Chakrabarti04}
D.~Chakrabarti, Y.~Zhan, and C.~Faloutsos.
\newblock {{R-MAT: A recursive model for graph mining}}.
\newblock In {\em SIAM Data Mining}, 2004.

\bibitem{Albert02}
R.~Albert and A.-L. Barabasi.
\newblock {{Statistical mechanics of complex networks}}.
\newblock In {\em Reviews of Modern Physics}, 2002.

\bibitem{Milgram67}
S.~Milgram.
\newblock {{The Small World Problem}}.
\newblock In {\em Psychology Today, 1 (1): 61-67}, 1967.

\bibitem{Erdos60}
P.~Erdos and A.~Renyi.
\newblock {{On the evolution of random graphs}}.
\newblock In {\em Mathematical Institute of the Hungarian Academy of Science,
  Vol. 5, pp. 17-61}, 1960.

\bibitem{Albert99}
R.~Albert and A.L. Barabasi.
\newblock {{Emergence of scaling in random networks}}.
\newblock In {\em Science, Vol. 286. no. 5439, pp. 509 - 512}, October 1999.

\bibitem{Kumar99}
S.R.Kumar, P.Raghavan, S.Rajagopalan, and A.Tomkins.
\newblock {{Extracting large-scale knowledge bases from the web}}.
\newblock In {\em VLDB, pages 639-650}, 1999.

\bibitem{Winick02}
J.~Winick and S.~Jamin.
\newblock {{Inet-3.0: Internet Topology Generator}}.
\newblock In {\em Technical Report CSE-TR-456-02, University of Michigan, Ann
  Arbor}, 2002.

\bibitem{Watts98}
D.~J. Watts and S.~H. Strogatz.
\newblock {{Collective dynamics of small-world networks}}.
\newblock In {\em Nature 393, 440-442}, June 1998.

\bibitem{Waxman88}
B.M.Waxman.
\newblock {{Routing of multipoint connections}}.
\newblock In {\em IEEE Journal on Selected Areas in Communications, Vol. 6, No.
  9. (1988), pp. 1617-1622}, December 1998.

\bibitem{He1}
Y.~He, G.~Siganos, M.~Faloutsos, and S.~V. Krishnamurthy.
\newblock {{A systematic framework for unearthing the missing links:
  Measurements and Impact}}.
\newblock In {\em USENIX/SIGCOMM NSDI 2007}, April 2007.

\bibitem{He2}
Y.~He.
\newblock {{Putting Links Together: Measurement and Impacts}}.
\newblock In {\em CAIDA WIT Workshop}, May 2006.

\bibitem{He3}
Y.~He.
\newblock {{The Lord of the Links: Identifying missing AS links and their
  implications}}.
\newblock In {\em CAIDA WIT Workshop}, May 2006.

\bibitem{He4}
Y.~He and G.~Siganos.
\newblock {{The Missing AS Links and Their Impact On the Internet Topology
  Model}}.
\newblock In {\em IEEE ICNP (Extended Abstract)}, November 2005.

\bibitem{CAIDA}
{Graph Metrics}.
\newblock http://www.caida.org/analysis/topology/as$\_$topo$\_$comparisons/.

\bibitem{Siganos03}
G.~Siganos, M.~Faloutsos, P.~Faloutsos, and C.~Faloutsos.
\newblock {{Power-laws and the AS-level Internet topology}}.
\newblock In {\em IEEE/ACM Transactions on Networking}, August 2003.

\bibitem{Huston99}
G.~Huston.
\newblock {{Interconnection, Peering and Settlements}}.
\newblock In {\em Internet Global Summit}, March 1999.

\end{thebibliography}

\end{document}